\newtheorem{proposition}{Proposition}
\newtheorem{definition}{Definition}
\definecolor{cream}{rgb}{1.0, 0.99, 0.82}
\definecolor{celadon}{rgb}{0.67, 0.88, 0.69}
\definecolor{beaublue}{rgb}{0.74, 0.83, 0.9}
\definecolor{shadecolor}{rgb}{1.0, 0.99, 0.82}
\definecolor{agcol}{rgb}{0.9, 1.0, 0.9}
\definecolor{jkcol}{rgb}{0.3, 1.0, 0.5}
\newcommand{\bra}[1]{\left\langle#1\right\rvert}
\newcommand{\ket}[1]{\left\lvert#1\right\rangle}
\newcommand{\ketbra}[2]{\left\lvert{#1}\middle\rangle\!\middle\langle{#2}\right\rvert}
\newcommand{\braket}[2]{\left\langle{#1}\middle\vert{#2}\right\rangle}
\newcommand{\tr}{\mathrm{Tr}}
\begin{document}

\bibliographystyle{apsrev}

\title{Completeness Stability of Quantum Measurements}
\author{Rakesh Saini}
\email[]{rakesh7698saini@gmail.com}
\affiliation{Centre for Engineered Quantum Systems, Department of Physics and Astronomy, Macquarie University, Sydney NSW 2113, Australia}
\author{Jukka Kiukas}
\affiliation{Department of Mathematics, Aberystwyth University, Aberystwyth SY23 3BZ, United Kingdom}
\author{Daniel Burgarth}
\affiliation{Physics Department, Friedrich-Alexander Universit\"at of Erlangen-Nuremberg, Staudtstr. 7, 91058 Erlangen, Germany}
\affiliation{Department of Mathematics, Aberystwyth University, Aberystwyth SY23 3BZ, United Kingdom}
\affiliation{Centre for Engineered Quantum Systems, Department of Physics and Astronomy, Macquarie University, Sydney NSW 2113, Australia}
\author{Alexei Gilchrist}
\affiliation{Centre for Engineered Quantum Systems, Department of Physics and Astronomy, Macquarie University, Sydney NSW 2113, Australia}

\date{\today}

\begin{abstract}
We introduce a resource monotone, the \emph{completeness stability}, to quantify the quality of quantum measurements within a resource-theoretic framework. By viewing a quantum measurement as a frame, the minimum eigenvalue of a frame operator emerges as a significant monotone.
It captures bounds on estimation errors and the numerical stability of inverting the frame operator to calculate the optimal dual for state reconstruction.
Maximizing this monotone identifies a well-characterized class of quantum measurements forming weighted complex projective 2-designs, which includes well-known examples such as SIC-POVMs. Our results provide a principled framework for comparing and optimizing quantum measurements for practical applications.
\end{abstract}

\maketitle
    
\section{Introduction}

Measurement devices constitute a fundamental component of quantum mechanics, serving as the mechanism for extracting information from a quantum system. The quality of the measurement device depends on various factors, including the mechanisms that create correlations between the system being measured and measurement outcomes, environmental noise, and the specific characteristics of the apparatus. Quantifying the quality through a general performance-based measure is therefore a task of significant importance.

The information-gain aspect of quantum measurements is captured by a Positive Operator-Valued Measure (POVM) \cite{1996BusQTM,2011HolPsa}. A POVM on a \( d \)-dimensional Hilbert space is \emph{informationally complete} if it consists of at least \(  d^2 \) effects that span the space of linear operators. Such measurements can in principle uniquely reconstruct any quantum state in the limit of infinite copies. In practice however, only a finite number of copies are available. Therefore, the quality of a measurement should be evaluated by how accurately and reliably it reconstructs an arbitrary quantum state under this constraint. A measurement device that facilitates efficient and accurate reconstruction with limited data is deemed superior.

To rigorously quantify the quality of a measurement device, we adopt a \emph{resource-theoretic} framework~\cite{2019chit001}. Resource theories provide a structured approach to quantifying the value of a given resource based on the utility in a given task. Resource theories can be defined by specifying a set of \emph{free} objects and \emph{free} transformations—operations that cannot increase the resource. Every object and operation outside this set is resourceful. The value of a resource can then be characterized by resource \emph{monotones}—real-valued functions that do not increase under the free operations. These monotones provide a consistent way to compare and order different objects based on their resourcefulness.
Resource-theoretic techniques have found broad application in quantum mechanics, including the study of entanglement, thermodynamics, coherence, reference frames, and more~\cite{2009hor942,2016goo001,2007bar609,2017str003,2019chit001}. Within quantum measurements, several resource theories have been developed based on distinct features: measurement sharpness~\cite{2024Bus235}, coherence~\cite{2020bae019}, incompatibility~\cite{2020Bus401,2015hkr115, 2022kmp205}, and post-processing~\cite{2021guf301,2019skr403}.

Our work focuses on resource theories based on classical post-processing. In this setting, the free operations correspond to stochastic mixing of POVM effects. The measurement statistics of mixtures can always be simulated by the original POVM followed by classical post-processing, but not necessarily vice versa. Consequently, any post-processed POVM cannot be more resourceful than the original POVM in any information-theoretic task such as state discrimination, tomography, or parameter estimation. Any valid resource monotone must therefore be non-increasing under post-processing.
Ideally, one would like a \emph{complete} set of monotones—quantities that fully characterize this ordering—such that whenever all monotones in the set assign greater or equal value to one POVM over another, it implies the former can simulate the latter via post-processing. While the set of all quantum state discrimination tasks constitutes such a complete family~\cite{2019skr403}, it is unwieldy and impractical to work with. Thus, identifying a small set of physically well motivated measurement monotones remains an important goal in the resource theory of quantum measurements.
One well-established monotone in this framework is the \emph{robustness of measurement}~\cite{2019skr403}, which quantifies the amount of noise required to render a measurement device completely uninformative, i.e., when each effect becomes proportional to the identity. However, in the regime of informationally complete POVMs, this monotone does not adequately capture reconstruction performance. This motivates the need for new monotones tailored to this regime.

To develop more suitable measures for quantifying the resourcefulness of informationally complete POVMs, finite frame theory provides a natural mathematical foundation~\cite{2006sco507, 2007dar403, 2011zhu327, 2014zhu115, 2022Per851, 2004Ari487}. A \emph{frame} is a set of vectors that spans a Hilbert space but which need not be linearly independent~\cite{2018walint, 2008kov094,2013Cas053}. In the context of quantum measurements, any informationally complete POVM provides a natural frame, namely the collection of its effects considered as vectors.
In frame theory, a \emph{frame operator} is used for reconstruction of vectors in terms of inner-products with the original frame. We find instead that a related construction where  each POVM element is scaled by the square root of its trace, leads to a powerful resource monotone. This \emph{scaled frame operator}, naturally appears across a variety of quantum information tasks. Notably, in the context of \emph{shadow tomography}~\cite{2017aar053}, it has been shown that the average state estimation error as well as average variance in estimating the expectation value of observables is minimized when using the scaled frame operator, assuming no prior information about the state~\cite{2023inn328}.
Moreover, these structures are deeply connected to optimal measurement strategies in quantum tomography and cloning, through an equivalence with complex projective 2-designs~\cite{2006sco507}.
In addition to these applications, the scaled frame operator is also used to study the tomographic efficiency using various estimation methods~\cite{ 2006sco507, 2011zhu327, 2014zhu115,2014zhu309}.
 
We show that this scaled frame operator fits naturally within the resource theory of POVMs. This leads to a set of measurement resource monotones that effectively quantify the quality of a measurement device. Among them, one that is particularly useful is the \emph{minimum eigenvalue} of the scaled frame operator, which we term the \emph{completeness stability}. 
We demonstrate that the completeness stability monotone is not only meaningful from a resource-theoretic standpoint but also has important operational significance. As a spectral quantity, it serves to quantify how close a POVM is to being informationally \emph{incomplete}. This has practical consequence that the inverse of the completeness stability is the condition number of the scaled frame operator, which determines the numerical stability of state reconstruction. A higher completeness stability implies lower sensitivity to noise.

Furthermore, in the context of shadow tomography, we show that completeness stability provides a tight upper bound on both the average state estimation error and the variance in estimating expectation values of observables when no prior information about the state is available. Thus, POVMs that maximize the completeness stability lead to more accurate and stable quantum tomography and parameter estimation.
Remarkably, when this monotone is maximized, the resulting POVMs lie on a well-defined boundary characterized by complex projective 2-designs. This class includes some of the most symmetric and optimal measurements known, such as symmetric informationally complete POVMs (SIC-POVMs) and mutually unbiased bases (MUBs), which are widely regarded as optimal for quantum state tomography and quantum cloning tasks.

The structure of the paper is as follows. Section~\ref{sec:background}, introduces the essential tools for our analysis, namely finite frame theory and measurement resource theory. In Section~\ref{sec:scaled-frame} we define the scaled frame operator and discuss requirements for resource monotones based on this operator. In Section~\ref{sec:stability}, we focus on the completeness stability, highlighting its properties and applications. Finally, in Section~\ref{sec:examples} we present some examples.

\section{Background}
\label{sec:background}

\subsection{Frame theory and informationally complete quantum measurements}
\label{sec:frame-theory}

We study quantum measurements using finite \emph{frame theory}~\cite{2018walint, 2008kov094,2013Cas053}. Frame theory provides a robust method for representing and decomposing arbitrary vectors in terms of a spanning set of vectors called a \emph{frame}, which may contain linearly dependent vectors. This redundancy in the frame offers greater flexibility and resilience in reconstruction, particularly under additive and independent noise. This makes frame theory a natural tool for analyzing the accuracy and reliability of informationally complete POVMs. 

Formally, a frame is a set of vectors $\{\ket{f_{j}}\}$ in a \(d\)-dimensional complex vector space $\mathbb C^d$, and constants $0 < a \le b < \infty$ such that
\begin{align}\label{frame bounds}
  a \braket{x}{x} \le \sum_{j=1}^{n} |\braket{x}{f_{j}}|^{2}  \le b \braket{x}{x} \quad \forall \ket{x} \in \mathbb C^d.
\end{align}
The constants $a$ and $b$ are referred to as the lower and upper frame bounds, respectively. If \(a=b\) the frame is called \emph{tight}. An orthonormal basis forms a tight frame with a bound of 1.
The associated \emph{frame operator} is defined as $F = \sum_{j=1}^{n} \ketbra{f_j}{f_j}$, which by construction is a positive-definite operator, and it has minimum and maximum eigenvalues corresponding to the optimal frame lower and upper bounds respectively. Since \(a>0\), the frame operator is invertible. A set of \emph{dual vectors} $\ket{f'_{j}}$, which also forms a frame, can be defined such that \(\sum_{i} \ket{f'_{j}}\bra{f_{j}}= I\),
which allows the reconstruction of any other vector $\ket{x}$ in terms of the inner-products with the frame vectors: $\ket{x} = \sum_{j} |f'_{j}\rangle\braket{f_{j}}{x} $.
Note that dual frames are not unique, the possible dual frames can be specified \cite{1995li181} as
\begin{equation}
  \ket{f'_{j}} = F^{-1}\ket{f_{j}} + \ket{g_{j}} - \sum_{k}\ketbra{g_{k}}{f_{k}}F^{-1}\ket{f_{j}},
\end{equation}
where the vectors \(\{\ket{g_{j}}\}\) are arbitrary.
Another useful construction that is complementary to the frame operator is the \emph{Gramian}. The Gramian is useful for obtaining the spectral properties of the frame operator. The Gramian is defined component wise as \( [G]_{r,c} = \braket{f_{r}}{f_{c}}\). 
 
We now consider a quantum setting with a $d$-dimensional Hilbert space $\mathcal H$, and let $\mathcal L(\mathcal H)$ denote the set of linear operators on $\mathcal H$ ($d\times d$ matrices). A POVM $\mathcal{A}$ associates a measurement outcome $a$ with a positive
semi-definite operator $A_{a}$, referred to as an \emph{effect}. The
effects satisfy the completeness relation $\sum_{a}A_{a}=I$, where $I$ is the identity operator. Then a quantum state is represented by a positive semidefinite matrix $\rho$ on $\mathcal H$ of unit trace, and the probability of obtaining outcome \(a\) of the POVM is \(p(a|\rho )=\mathrm{Tr}\left[\rho A_{a}\right]\). 

We equip the linear space $\mathcal L(\mathcal H)$ with the Hilbert-Schmidt inner product $\langle A|B\rangle = {\rm Tr}[A^{\dagger}B]$; then we can write $p_{a}=\braket{A_{a}}{\rho}$ where $\ket{\rho},\ket{A_{a}} \in \mathcal{L}(\mathcal H)$, and regard $\mathcal L(\mathcal H)$ as a Hilbert space. Also, $\ket{A_{a}}$ is the \emph{vectorization} of the POVM effect $A_{a}$, constructed as $\ket{A_{a}} \equiv\sum_{j=1}^{d} I\otimes A_{a} \ket{j}\ket{j}$ where $\{\ket{j}\}$ is an orthonormal basis. This is just an explicit representation of the linear operator \(A_{a}\) as a vector in a basis of the system on which it acts.

If the POVM effects span $\mathcal{L}(\mathcal H)$, then the POVM is \emph{informationally complete} and the effects form a frame. In this case we can construct the frame operator and Gramian
\begin{align}
  \label{eq:frame-operator}
  F(\mathcal{A}) &= \sum_{a} \ketbra{A_{a}}{A_{a}} \\
  G(\mathcal{A}) &= \sum_{a,b} \braket{A_{a}}{A_{b}}\ketbra{a}{b},
\end{align}
for the Hilbert space $\mathcal L(\mathcal H)$. The corresponding canonical dual frame is $\ket{A'_{a}} = F(\mathcal{A})^{-1}\ket{{A}_{a}}$ and it is intimately connected to quantum tomography, allowing the reconstruction of a quantum state from observed measurement probabilities:
\begin{align}\label{eq:tomo}
    \sum_{a} p_{a} \ket{A'_{a}} &= \sum_{a} \braket{A_{a}}{\rho} F^{-1} \ket{A_{a}} \nonumber\\
               &= F^{-1} \sum_{a} \ketbra{A_{a}}{A_{a}} \ket{\rho} = \ket{\rho}.
\end{align}
This process is sometimes called ``linear inversion'' tomography to distinguish it from more statistically sophisticated schemes such as maximum likelihood or Bayesian tomographic reconstruction (See discussion in \cite{2010blu034} and references therein). 
Note that a frame is a more general object than a POVM in that its possible to have a set of linear operators that span their vector space that do not form a POVM. Such a set may also be used to reconstruct a state if the inner products in \eqref{eq:tomo} are known.

The relationship between the frame operator and the Gramian is insightful from the perspective of quantum systems. They can be both thought of as reduced states of a larger bipartite pure state. Given a POVM $\mathcal{A}$ with \(n\) effects on a \(d\)-dimensional Hilbert space, first construct the matrix
\begin{equation}\label{eq:Mv}
M(\mathcal{A}) = \frac{1}{N_{\mathcal A}}\sum_{a,a'}\ketbra{a}{a'}\otimes \ketbra{A_{a}}{A_{a'}},
\end{equation}
where $\{\ket{a}\}$ is a basis in \(n\ge d^{2}\) dimensions.
The normalization constant is given by
\begin{align}
N_{\mathcal A}=\tr[F(\mathcal{A})]
=\sum_{a} \braket{A_{a}}{A_{a}}
=\sum_{a}\tr[A_{a}^{\dagger}A_{a}].
\end{align}
The matrix $M(\mathcal{A})$ is a rank-1 projector.
In fact, we clearly have
\begin{equation}
M(\mathcal A) = |K\rangle \langle K|,
\end{equation}
where $|K\rangle$ is a bi-partite pure state across two systems of dimensions $n$ and $d^2$, respectively given by
\begin{equation}
|K\rangle =\frac{1}{\sqrt{N_\mathcal A}}\sum_a |a\rangle \otimes |A_a\rangle.
\end{equation}
Consequently the reduced states of $|K\rangle$ will have the same non-zero spectrum by the Schmidt decomposition. They are $F_{s}(\mathcal{A})$ and $G(\mathcal{\mathcal A})$ up to the factor $N_{\mathcal A}$:
\begin{align}
\tr_{1}[M(\mathcal{A})]
&= \frac{1}{N_{\mathcal A}}\sum_{a} \ketbra{ A_{a}}{A_{a}} = \frac{1}{N_{\mathcal A}}F(\mathcal{A}), \\
\tr_{2}[M(\mathcal{A})]
&= \frac{1}{N_{\mathcal A}}\sum_{a,a'} \braket{ A_{a}}{A_{a'}}\ketbra{a}{a'} =\frac{1}{N_{\mathcal A}}G(\mathcal{A}).
\end{align}

\subsection{Resource theory of POVMs}
In the resource theory of POVMs~\cite{2021guf301}, a POVM $\mathcal{A}$ is defined to be at least as informative as another POVM $\mathcal{B}$, denoted $\mathcal{A} \succeq \mathcal{B}$, if there exists a column-stochastic matrix \( S \) that transforms the effects of $\mathcal{A}$ to those of $\mathcal{B}$:
\begin{equation} \label{eq:B=SA}
  \vec{B} = S\vec{A},
\end{equation}
where for convenience we will represent a POVM with effects $\{A_{1},\ldots,A_{m}\}$ as the symbolic vector $\vec{A}$.
This implies that each effect \( B_b \) of $\mathcal{B}$ can be expressed as a probabilistic combination of the effects \( A_a \) of $\mathcal{A}$:
\begin{equation}
  B_{b} = \sum_{a} p(b|a)A_{a},
\end{equation}
where the probabilities \( p(b|a) \) satisfy \( \sum_{b} p(b|a) = 1 \) for all \( a \).
In essence, \( B_b \) can be interpreted as selecting \( A_a \) with probability \( p(b|a) \), which indicates that any statistical predictions achievable with POVM \( \mathcal{B}  \) are also achievable with POVM \(\mathcal{A} \) with post-processing, but not necessarily the reverse.

As demonstrated in~\cite{2021guf301}, any stochastic matrix \( S \) can be realized using two fundamental operations that \emph{apriori} can not improve a measurement:
\begin{enumerate}
    \item \textbf{Making up outcomes:} Reporting multiple different outcomes from a single outcome, such as \( \{ A_{1} \} \rightarrow \{ p_1 A_{1}, \dots, p_j A_{1} \} \), where \( \sum_{n=1}^{j} p_{n} = 1 \). This operation is reversible since it preserves all information—--it merely partitions the original outcome. As such, it does not reduce the quality of the measurement.
    \item \textbf{Confusing outcomes:} Combining two or more outcomes and reporting as a single outcome, such as \( \{ A_1, A_2 \} \rightarrow \{ A_1 + A_2 \} \). This operation is irreversible and results in a loss of information, thereby degrading the quality of the measurement device.
\end{enumerate}

To quantify the quality of a POVM we define \emph{resource monotones}---functions that assign a numerical value to each POVM in a way that allows comparison across different POVMs. A monotone $\mu$ for this resource theory is any function that maps a POVM to a real number and satisfies:
\begin{equation}\label{monotone equation}
  \mathcal{A} \succeq \mathcal{B} \implies \mu(\mathcal{A}) \ge \mu(\mathcal{B}).
\end{equation}

A well-known example of a resource monotone for POVMs is the \emph{robustness} of a measurement~\cite{2019skr403}, which quantifies the amount of noise that must be added to render the measurement completely uninformative. The robustness of a measurement serves as an important measure of quality, which has recently been extended to general resource theories~\cite{2019tak053}, establishing connections to single-shot information theory and advantages in state discrimination tasks~\cite{2019skr403}.  

In discrimination games, robustness reflects how effectively a quantum measurement device distinguishes states compared to random guessing. Its minimum occurs for measurements proportional to the identity (i.e. uninformative measurements) and its maximum occurs for any set of at least $d$ rank-one projector effects. In fact, any informationally complete POVM composed entirely of rank-one effects achieves this maximal robustness, regardless of the spread or orientation of the effects. As we will illustrate through example later, this insensitivity to the geometric arrangement of the POVM makes robustness an inadequate metric for comparing measurements \emph{within} the informationally complete regime. In this work, we introduce new monotones and explore their effectiveness across this regime. We will see in the next section how using a scaled frame leads to interesting resource monotones.

\section{Scaled Frame}
\label{sec:scaled-frame}

The frame naturally associated with an informationally complete POVM does not always lead to a meaningful resource monotone (see Appendix~\ref{frame monotone}). Scaling each POVM element by the square root of its trace forms a new frame, and we define the \emph{scaled frame operator}, which is a more operationally relevant object.

\begin{definition}
The scaled frame operator for POVM $\mathcal{A}$ is defined as:
\begin{equation}
  F_{s}(\mathcal{A}) = \sum_{a} \frac{\ketbra{A_{a}}{A_{a}}}{\braket{A_a}{I}}.
\end{equation}
\end{definition}
As with the normal frame operator \( F \), we have the upper and lower bounds on the scaled frame operator:
\begin{equation} \label{scaled frame operator bound}
  \Lambda_{\min}[F_{s}(\mathcal{A})] \leq  \bra{X} F_s(\mathcal{A}) \ket{X}  \leq \Lambda_{\max}[F_{s}(\mathcal{A})],
\end{equation}
holding for any operator $X$ with $\langle X| X\rangle =1$, and where $\Lambda_{min}$ and $\Lambda_{max}$ are the minimum and maximum eigenvalues of scaled frame operator respectively. 
The scaled frame operator exhibits interesting properties for POVMs. Notably, all its eigenvalues lie in the range $[0,1]$. The maximum eigenvalue is always $1$ with the identity as an eigenvector.
\begin{proposition}
All eigenvalues of scaled frame operator for POVMs lie in $\left[0,1 \right]$. The maximum eigenvalue is always $1$ with the identity matrix $I$ as an eigenvector.
\end{proposition}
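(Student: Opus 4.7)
The plan is to handle the three claims in order: nonnegativity of the spectrum, that $I$ is an eigenvector with eigenvalue $1$, and that no eigenvalue exceeds $1$. The first is immediate: each term $\ketbra{A_a}{A_a}/\braket{A_a}{I}$ is a rank-one positive operator (with $\braket{A_a}{I}=\tr A_a\ge 0$, and terms with $\tr A_a=0$ are dropped since then $A_a=0$), so $F_s(\mathcal{A})$ is a sum of positive operators and therefore positive semidefinite.

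For the eigenvalue-one claim, I would just compute directly. Since $\braket{A_a}{I}=\tr A_a$ and $\braket{A_a}{I}$ is real, acting with $F_s(\mathcal{A})$ on $\ket{I}$ gives
\begin{equation*}
F_s(\mathcal{A})\ket{I} = \sum_a \frac{\ket{A_a}\braket{A_a}{I}}{\braket{A_a}{I}} = \sum_a \ket{A_a} = \ket{\textstyle\sum_a A_a} = \ket{I},
\end{equation*}
where the last equality uses the POVM completeness relation $\sum_a A_a = I$. So $\ket{I}$ is an eigenvector with eigenvalue $1$; after normalization the unit eigenvector is $\ket{I}/\sqrt{d}$.

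The main work is bounding $\Lambda_{\max}\le 1$. I would do this by showing $\bra{X}F_s(\mathcal{A})\ket{X}\le 1$ for every unit Hilbert--Schmidt vector $\ket{X}$ (i.e.\ $\tr(X^\dagger X)=1$). Writing each term as $|\tr(X^\dagger A_a)|^2/\tr A_a$ and using the polar factorization $A_a = A_a^{1/2}A_a^{1/2}$, the Cauchy--Schwarz inequality on $\mathcal{L}(\mathcal{H})$ applied to $Y=XA_a^{1/2}$ and $Z=A_a^{1/2}$ gives
\begin{equation*}
|\tr(X^\dagger A_a)|^2 = |\tr(Y^\dagger Z)|^2 \le \tr(Y^\dagger Y)\tr(Z^\dagger Z) = \tr(X^\dagger X A_a)\,\tr A_a.
\end{equation*}
Dividing by $\tr A_a$ and summing over $a$, the completeness relation $\sum_a A_a = I$ collapses the right-hand side to $\tr(X^\dagger X)=1$, which is exactly the bound needed.

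The only real obstacle is picking the right Cauchy--Schwarz splitting so that the $\tr A_a$ factor in the denominator cancels cleanly; once that is done the POVM property does all the remaining work. Combining the three steps shows the spectrum lies in $[0,1]$ with $1$ attained at $I/\sqrt{d}$, proving the proposition.
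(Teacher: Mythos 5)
Your proposal is correct and follows essentially the same route as the paper: the identity eigenvector is verified by direct computation using $\sum_a A_a = I$, and the upper bound $\Lambda_{\max}\le 1$ comes from the same Cauchy--Schwarz splitting through $A_a^{1/2}$ so that the $\tr A_a$ denominator cancels and completeness collapses the sum. The only difference is cosmetic (you make the positivity of each rank-one term explicit, while the paper additionally records the saturation condition $X\propto I$, which is not needed for the statement as written).
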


\begin{proof}
  Given a POVM $\mathcal{A}$ and $\ket{X} \in \mathcal{L}(\mathcal H) $:
  \begin{align*}
     \bra{X} F_{s}(\mathcal{A}) \ket{X} & = \sum_{a}  \frac{|\braket{X}{A_a}|^{2}}{\braket{A_a}{I}} = \sum_{a}  \frac{|\mathrm{Tr}[X^\dagger A_a]|^{2}}{\braket{A_a}{I}} \\
   &=  \sum_{a}  \frac{|\mathrm{Tr}[(\sqrt{A_a}X)^\dagger \sqrt{A_a}  ]|^{2}}{\braket{A_a}{I}}  \\
   & \leq \sum_{a}  \frac{\mathrm{Tr}[ \sqrt{A_a} \sqrt{A_a}  ] \,\mathrm{Tr}[X^\dagger \sqrt{A_a}\sqrt{A_a} X  ]}{\braket{A_a}{I}} \\
   &= \sum_{a}  \frac{\mathrm{Tr}[ A_a  ]\,\mathrm{Tr}[X^\dagger  A_a X ]}{\braket{A_a}{I}}\\
   &= \sum_{a}  \mathrm{Tr}[X^\dagger A_a X  ] = \braket{X}{X},
\end{align*}
\noindent
where we applied the Cauchy--Schwarz inequality to the numerator, \( |\braket{X}{Y}|^2 \leq \braket{X}{X} \braket{Y}{Y} \), with equality if and only if \( \ket{X} \propto \ket{Y} \). In our case, this translates to the condition \( \sqrt{A_a} X = \alpha \sqrt{A_a} \) for some scalar \( \alpha \), for every effect \( A_a \). Equivalently, this means \( A_a X = \alpha A_a \) for all \( a \). Since the POVM \( \{A_a\} \) is informationally complete, the set \( \{\ket{A_a}\} \) spans the entire operator space \( \mathcal{L}(\mathcal{H}) \), implying that the only operator \( X \) satisfying this condition for all combination of  \( A_a \) is proportional to the identity. Hence, the inequality is saturated if and only if \( X \propto I \). Using the completeness relation \( \sum_a A_a = I \), we finally verify that:
\begin{equation*}
  F_{s}(\mathcal{A}) \ket{I}= \sum_{a} \ket{A_a} \frac{\braket{A_a}{I}}{\braket{A_a}{I}}= \ket{I}.
\end{equation*}
Hence, the identity matrix is an eigenvector of the scaled frame operator corresponding to the maximum eigenvalue, which is always $1$. 
\end{proof}

These spectral properties offers valuable insight into the reconstruction and estimation of density matrices which we explore further in later sections.

In the resource-theoretic setting, the scaled frame operator remains invariant under reversible operations like making up outcomes, reflecting that such transformations do not degrade the quality of the measurement. In contrast, irreversible operations like confusing outcomes degrade the measurement, yielding a family of monotones that quantify the degradation caused by post-processing, as we prove below.
\begin{proposition}\label{monotones}
  The scaled frame operator provides a set of resource monotones. In fact, for any $\ket{X} \in \mathcal{L}(\mathcal{H})$ the function $\mathcal A\mapsto \langle X|F_s(\mathcal A)|X\rangle$ is a resource monotone.
\end{proposition}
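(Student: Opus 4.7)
The plan is to show the post-processing inequality directly from the definition of $F_s$ and the structure of column-stochastic mixing. Suppose $\mathcal{A} \succeq \mathcal{B}$, so that $B_b = \sum_a p(b|a)A_a$ with $\sum_b p(b|a)=1$ for every $a$. Fix $\ket{X}\in\mathcal{L}(\mathcal H)$ and introduce the shorthand $\alpha_a = \braket{X}{A_a}=\mathrm{Tr}[X^\dagger A_a]$ and $t_a=\braket{A_a}{I}=\mathrm{Tr}[A_a]\ge 0$. Then by linearity of the trace,
\begin{equation*}
\langle X|F_s(\mathcal B)|X\rangle
=\sum_b \frac{|\sum_a p(b|a)\,\alpha_a|^2}{\sum_a p(b|a)\,t_a}.
\end{equation*}
So the whole claim reduces to comparing this quantity with $\sum_a |\alpha_a|^2/t_a = \langle X|F_s(\mathcal{A})|X\rangle$.

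The main technical step is to bound each $b$-summand. I would use Cauchy--Schwarz in the form
\begin{equation*}
\Bigl|\sum_a p(b|a)\alpha_a\Bigr|^2
=\Bigl|\sum_a \sqrt{p(b|a)t_a}\,\tfrac{\sqrt{p(b|a)}\,\alpha_a}{\sqrt{t_a}}\Bigr|^2
\le \Bigl(\sum_a p(b|a)t_a\Bigr)\Bigl(\sum_a p(b|a)\tfrac{|\alpha_a|^2}{t_a}\Bigr),
\end{equation*}
which one can also view as the sub-additivity of the perspective function $(\alpha,t)\mapsto |\alpha|^2/t$ (jointly convex and positively homogeneous of degree one on $\mathbb{C}\times(0,\infty)$). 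Dividing by $\sum_a p(b|a)t_a$ gives
\begin{equation*}
\frac{|\sum_a p(b|a)\alpha_a|^2}{\sum_a p(b|a)t_a}\le \sum_a p(b|a)\,\frac{|\alpha_a|^2}{t_a}.
\end{equation*}
(Care is needed if some $t_a=0$: any such effect is the zero operator, so $\alpha_a=0$, and the corresponding terms can be dropped on both sides without changing anything.)

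Summing the last inequality over $b$ and using column-stochasticity $\sum_b p(b|a)=1$ gives
\begin{equation*}
\langle X|F_s(\mathcal B)|X\rangle
\le \sum_a \frac{|\alpha_a|^2}{t_a}\sum_b p(b|a)
=\sum_a \frac{|\alpha_a|^2}{t_a}
=\langle X|F_s(\mathcal A)|X\rangle,
\end{equation*}
which is precisely \eqref{monotone equation} for $\mu(\mathcal A)=\langle X|F_s(\mathcal A)|X\rangle$. I expect the sharpest point of attention to be that column-stochasticity enters only at the very last step: the $b$-sum does not have weights adding to one, so one cannot invoke Jensen directly on the $b$ index, and the Cauchy--Schwarz/perspective step is what makes the bound tight for each $b$ individually. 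With these pieces in place the proposition follows, and as a corollary the minimum eigenvalue $\Lambda_{\min}[F_s(\mathcal A)]=\min_{\langle X|X\rangle=1}\langle X|F_s(\mathcal A)|X\rangle$ is itself a monotone, setting up the definition of completeness stability.
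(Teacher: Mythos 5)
Your proposal is correct and follows essentially the same route as the paper's own proof: the identical Cauchy--Schwarz splitting $\sum_a \sqrt{p(b|a)t_a}\cdot\bigl(\sqrt{p(b|a)}\,\alpha_a/\sqrt{t_a}\bigr)$ applied per outcome $b$, followed by summing over $b$ and invoking column-stochasticity. Your explicit handling of the $t_a=0$ case and the perspective-function remark are minor additions, not a different argument.
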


\begin{proof}
Let $\mathcal A$, $\mathcal B$ be POVMs such that $\mathcal A\succeq \mathcal B$, that is,
\[
B_b = \sum_a \mu(b|a)A_a
\]
for all $b$, where $\mu(b|a)$ is some postprocessing. Fix any matrix $X$; we need to show that
$\langle X|F_s(\mathcal B)|X\rangle \leq \langle X|F_s(\mathcal A)|X\rangle$
where e.g.
\[
  \langle X|F_s(\mathcal B)|X\rangle=\sum_b \frac{|\langle B_b|X\rangle|^2}{\langle B_b|I\rangle}.
\]
We first note that 
\begin{align*}
|\langle B_b|X\rangle|^2
&=\left| \sum_a \sqrt{\mu(b|a)\langle A_a|I\rangle}\sqrt{\mu(b|a)}\frac{\langle A_a|X\rangle}{\sqrt{\langle A_a|I\rangle}}\right|^2\\
&\leq \left(\sum_a \mu(b|a) \langle A_a|I\rangle\right)\left(\sum_a\mu(b|a)\frac{|\langle A_a|X\rangle|^2}{\langle A_a|I\rangle}\right).
\end{align*}
Where the initial sum can be seen as an inner-product of two vectors indexed by \(a\), then the Cauchy-Schwartz inequality was applied.
But here $\sum_{a} \mu(b|a) \langle A_a|I\rangle=\langle B_b|I\rangle> 0$, so using the fact that $\sum_b \mu(b|a) = 1$ for any $a$, we get
\begin{align*}
\sum_b \frac{|\langle B_b|X\rangle|^2}{\langle B_b|I\rangle}&\leq \sum_{a,b} \mu(b|a)\frac{|\langle A_a|X\rangle|^2}{\langle A_a|I\rangle}
= \sum_a\frac{|\langle A_a|X\rangle|^2}{\langle A_a|I\rangle},
\end{align*}
which proves the claim.
\end{proof}

A simple consequence of proposition~\ref{monotones} is that if \(\mathcal{A}\succeq \mathcal{B}\) then \( F_s(\mathcal{A}) - F_s(\mathcal{B}) \ge 0 \) is positive semi-definite.
The set of monotones defined by \( \langle X|F_s(\mathcal A)|X\rangle \) are not a complete set, see Appendix~\ref{sec:counter}, but among these monotones, certain choices of \( X \) hold particular significance and offer valuable physical applications.

\section{Completeness Stability for POVMs}
\label{sec:stability}

We now introduce a specific monotone, also based on the scaled frame operator, which captures a measure of the stability of an informationally complete measurement by quantifying how far the scaled frame operator is from invertibility. To be invertible, the scaled frame operator has to be positive definite. The \textit{completeness stability} is defined as the minimum eigenvalue of the scaled frame operator, and serves as a measure of how far the operator is from being invertible. Formally, we define
\begin{definition}
For any POVM $\mathcal A$, we let $s(\mathcal A)$ denote the minimum eigenvalue of $F_s(\mathcal A)$, and call it the \emph{completeness stability} of the POVM $\mathcal A$.
\end{definition}

We then have the following result:

\begin{proposition}
The completeness stability $\mathcal A\mapsto s(\mathcal A)$ is a resource monotone.
\end{proposition}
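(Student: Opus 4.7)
The plan is to obtain this essentially for free from Proposition~\ref{monotones} by invoking the Rayleigh--Ritz variational characterization of the minimum eigenvalue. The scaled frame operator $F_s(\mathcal A)$ is Hermitian (in fact positive semi-definite) as an operator on the Hilbert space $\mathcal L(\mathcal H)$, so its minimum eigenvalue is
\[
s(\mathcal A) = \min_{\langle X|X\rangle = 1}\langle X|F_s(\mathcal A)|X\rangle.
\]
This realizes $s$ as an infimum over a family of functionals which are already known, by Proposition~\ref{monotones}, to be individually monotone under post-processing. The only real content of the proof is then to argue that the minimum of a family of monotones, taken in the right sense, is again a monotone.

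Concretely, assume $\mathcal A\succeq \mathcal B$ and let $\ket{X_0}\in\mathcal L(\mathcal H)$ be a unit-norm eigenvector of $F_s(\mathcal A)$ achieving its smallest eigenvalue, so that $s(\mathcal A)=\langle X_0|F_s(\mathcal A)|X_0\rangle$. First I would apply Proposition~\ref{monotones} to this specific $\ket{X_0}$ to get
\[
\langle X_0|F_s(\mathcal A)|X_0\rangle \geq \langle X_0|F_s(\mathcal B)|X_0\rangle.
\]
Then I would apply the variational bound in the other direction, this time to $F_s(\mathcal B)$, obtaining
\[
\langle X_0|F_s(\mathcal B)|X_0\rangle \geq \min_{\langle X|X\rangle = 1}\langle X|F_s(\mathcal B)|X\rangle = s(\mathcal B).
\]
Chaining these two inequalities yields $s(\mathcal A)\geq s(\mathcal B)$, which is exactly the monotone condition~\eqref{monotone equation}.

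An equivalent route, worth mentioning as a sanity check, uses the corollary noted immediately after Proposition~\ref{monotones}: $\mathcal A\succeq \mathcal B$ implies that $F_s(\mathcal A) - F_s(\mathcal B)$ is positive semi-definite. Writing $F_s(\mathcal A) = F_s(\mathcal B) + \bigl(F_s(\mathcal A)-F_s(\mathcal B)\bigr)$ and applying Weyl's monotonicity theorem for eigenvalues of Hermitian matrices gives $\lambda_{\min}(F_s(\mathcal A)) \geq \lambda_{\min}(F_s(\mathcal B))$ directly. Because Proposition~\ref{monotones} has already done all the heavy lifting, there is no genuine obstacle here; the only thing to watch is that the variational principle is taken over the full operator Hilbert space $\mathcal L(\mathcal H)$ with the Hilbert--Schmidt normalization $\langle X|X\rangle=1$, matching the setting in which Proposition~\ref{monotones} was stated.
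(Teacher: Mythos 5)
Your proof is correct and follows essentially the same route as the paper: evaluate the monotone of Proposition~\ref{monotones} at a minimizing eigenvector of $F_s(\mathcal A)$ and then lower-bound $\langle X_0|F_s(\mathcal B)|X_0\rangle$ by $s(\mathcal B)$ via the variational bound~\eqref{scaled frame operator bound}. The additional remark on Weyl's monotonicity is a valid alternative but not needed.
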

\begin{proof}
Let $\mathcal A, \mathcal B$ be POVMs such that $\mathcal A\succeq \mathcal B$. Let $C$ be any normalised eigenvector of $F_s(\mathcal A)$ corresponding to its minimum eigenvalue, so that $\langle C|F_s(\mathcal A)|C\rangle = s(\mathcal A)$. By Prop.~\ref{monotones},
\[
  s(\mathcal A)=\langle C|F_s(\mathcal A)|C\rangle \geq \langle C|F_s(\mathcal B)|C\rangle\geq s(\mathcal B),
\]
where the second inequality follows from~\eqref{scaled frame operator bound}.
\end{proof}

\subsection{Significance}

As discussed in Sec.~\ref{sec:frame-theory}, any quantum state \(\rho\) can be reconstructed from the set of probabilities \(\{p_{a}\}\) of a given POVM \(\mathcal{A} = \{A_a\}\) using any dual \(\{A'_a\}\) by \(\rho =\sum_{a}p_{a}A'_a\).
However, in practice only a finite number \(n\) of measurement results are available, and the probabilities are estimated from the frequencies of each outcome \(p_{a}\approx f_{a} = n_{a} /n\) which follow a multinomial distribution. Using the frequencies forms a linear unbiased estimator for state, \(\hat{\rho} = \sum_{a} f_{a} A'_a\). The estimation error can be characterized using the mean squared error and for a given dual frame is \cite{2011zhu327,2023inn328}:
\begin{equation}
  \mathcal{E}_{\rho} \equiv E[\lVert\hat{\rho}-\rho\rVert_{2}^{2}] = \frac{1}{n} \left(\sum_{a} p_a \mathrm{Tr}\left[ \ketbra{A'_a}{A'_a} \right] - \braket{\rho}{\rho}\right).
\end{equation}

To assess the quality of a measurement independently of the state, the estimation error is averaged over all unitarily equivalent states (which will have a fixed purity \(\mathcal{P} = \mathrm{Tr}[\rho^2]\)), resulting in~\cite{2011zhu327}:
\begin{equation}
  \overline{\mathcal{E}_{\mathcal{P}}} \equiv \frac{1}{n} \left( \sum_{a}  \frac{\mathrm{Tr}[A_a]}{d} \mathrm{Tr}\left[ \ketbra{A'_a}{A'_a} \right] - \mathcal{P} \right).
\end{equation}
The dual that minimizes this error is given as~\cite{2006sco507, 2011zhu327}:
\begin{equation}
\ket{A'_a}  = F_s(\mathcal{A})^{-1} \frac{\ket{A_a}}{\braket{A_a}{I}}.
\end{equation}
Consequently the average estimation error over unitarily equivalent states with a given purity becomes,
\begin{equation}
  \overline{\mathcal{E}_{\mathcal{P}}} = \frac{1}{n} \left(  \frac{1}{d} \mathrm{Tr}\left[ F_s(\mathcal{A})^{-1} \right] - \mathcal{P} \right).
\end{equation}

We can bound \(F_s(\mathcal{A})^{-1}\) by its largest eigenvalue, i.e. \(1/s(\mathcal{A})\) 
\begin{equation}
\mathrm{Tr}[F_s^{-1}(\mathcal{A})] \le \left( \frac{d^2 - 1}{s(\mathcal{A})} + 1 \right),
\end{equation}
leading to a bound on the average estimation error:
\begin{equation}
\frac{1}{n} \left( d^2 + d - 1 - \mathcal{P} \right)  \leq \overline{ \mathcal{E}_{\mathcal{P}}} \leq \frac{1}{n} \left( \frac{d^2 - 1}{d \, s(\mathcal{A})} + \frac{1}{d} - \mathcal{P} \right),
\end{equation}
where the lower bound was established by minimizing the trace of the inverse scaled frame operator~\cite{2006sco507}.
% also used in  2011zhu327, 2023inn328, 2014zhu115. 
Thus, a higher completeness stability implies a lower estimation error. Note that as the completeness stability is increased both bounds approach each other, and for measurements that achieve maximal completeness stability (see next section) the bounds become equal.

Performing full quantum state tomography can be extremely resource-intensive as the dimension of the state get large, and in many cases only partial information about the state is desired. Another important application of completeness stability arises in the context of \emph{shadow tomography}, which aims to estimate the expectation values of observables to a desired accuracy without reconstructing the entire state~\cite{2020Hua057,2017aar053}.

The authors of~\cite{2023inn328} consider a similar optimisation over duals to that presented above to reduce the estimation variance of the expectation value of an observable \(\mathcal{O}\) on a quantum state \(\rho\). Using a fixed POVM \(\mathcal{A}\) and optimal associated dual frame, and again averaging over all unitarily equivalent quantum states, the average variance of estimation is bound by :
\begin{equation}
 \overline{\mathrm{Var}[\hat{o}|\mathcal{P}]} \leq \frac{V}{d} \left[ \frac{1}{s(\mathcal{A})} - \frac{d \, \mathcal{P} - 1}{d^2 - 1} \right],
\end{equation}
where \(V = \mathrm{Tr}[\mathcal{O}^2] - \mathrm{Tr}[\mathcal{O}]^2 /d\), and \(\mathcal{P}\) is the purity of the state.
Since all other parameters are measurement-independent, the estimation variance is primarily determined by the completeness stability again.

Finally, the completeness stability is also central to the numerical stability in the application of the scaled frame operator. This involves inverting the scaled frame operator, where numerical stability---that is, the sensitivity of the output to slight input perturbations---is quantified by the condition number of the matrix~\cite{1997TreNLA}. A lower condition number indicates greater numerical stability and ease of inversion. Since the scaled frame operator always has a maximum eigenvalue of \(1\), the condition number is the inverse of the completeness stability. A larger completeness stability implies a better numerical stability in inverting the frame operator.

\subsection{Extremes of the completeness stability}
\label{sec:extremes}

The completeness stability has a lower bound of zero, which corresponds to the case where the scaled frame operator becomes non-invertible. In this situation, the POVM is informationally incomplete and ceases to form a frame. To explore the other extreme, we seek to determine the \emph{maximum possible} value of the minimum eigenvalue \(\lambda_{\min}\) of the scaled frame operator.

To do so, we begin by analyzing the trace of the scaled frame operator. Using the inequality that the trace of a positive operator squared is less than or equal to the square of its trace, we have for each POVM element \(A_a \geq 0\):
\begin{align}
\mathrm{Tr}[F_s(\mathcal{A})] 
&= \sum_a \frac{\braket{A_a}{A_a}}{\braket{A_a}{I}} 
\leq \sum_a \frac{\braket{A_a}{I}^2}{\braket{A_a}{I}} \nonumber \\
&= \braket{\sum_a A_a}{I} = \mathrm{Tr}[I] = d,
\end{align}
where \(d\) is the dimension of the Hilbert space. Equality in this bound, i.e., \(\mathrm{Tr}[A_a^2] = \mathrm{Tr}[A_a]^2\), is achieved if and only if each \(A_a\) is a rank-one operator, \(A_a = w_{a}\ketbra{\phi_{a}}{\phi_{a}}\), with \(\sum_{a}w_{a}=d\). Note that the scaled frame operator will consequently be
\begin{equation}
  F_s(\mathcal{A}) = \sum_{a}w_{a}(\ketbra{\phi_{a}}{\phi_{a}})^{T}\otimes\ketbra{\phi_{a}}{\phi_{a}},
\end{equation}
where \(T\) denotes a transpose.

Since the largest eigenvalue is always 1 (corresponding to the normalised eigenvector \(\ket{I} /\sqrt{d}\)), and the trace is fixed to be \(d\), the maximum possible value of \(\lambda_{\min}\) is achieved when the remaining \(d^2 - 1\) eigenvalues are all equal. In this extremal case, those eigenvalues must each be equal to \(1/(d+1)\), leading to the scaled frame operator having the structure:
\begin{equation}\label{scaled frame for maximal condition}  
F_s(\mathcal{A}) = \frac{\ketbra{I}{I}}{d} + \sum_{i=1}^{d^2 - 1} \frac{\ketbra{\mathcal{X}_i}{\mathcal{X}_i}}{d+1},
\end{equation}
where \(\{ \ket{\mathcal{X}_i} \}\) form an orthonormal basis for the traceless subspace of \(\mathcal{L}(\mathcal{H})\).
We can rewrite \(F_s(\mathcal{A})\) in terms of projectors into each eigenspace as
\begin{align}\label{eq:tight-rank1-IC}
  F_s(\mathcal{A}) &= \frac{1}{d} \ketbra{I}{I} + \frac{1}{d+1} \left(I_{d^{2}}-\frac{1}{d}\ketbra{I}{I}\right) \nonumber\\
  &= \frac{1}{d+1}\left(I_{d^{2}}+\ketbra{I}{I}\right).
\end{align}
Where $I_{d^{2}}$ is the $d^{2}$-dimensional identity matrix. POVMs whose scaled frame operator matches this form are known as \emph{tight rank-1 informationally complete POVMs} \cite{2006sco507}. For these POVMs, the set \(\{w_{a}, \ket{\phi_{a}} \}\) forms a 
weighted complex projective 2-design. 
The POVMs on this boundary exhibit not only extremal resource-theoretic properties but also optimal performance in practical applications, including linear quantum state tomography and quantum cloning~\cite{2006sco507}.

Two well-known examples of such boundary case POVMs include the \emph{Symmetric Informationally Complete POVMs} (SIC-POVMs) and POVMs constructed from sets of \emph{Mutually Unbiased Bases} (MUBs). We will explicitly compute the completeness stability for these examples below as the proofs are simple and instructive.

A Symmetric Informationally Complete POVM (SIC-POVM) \(\mathcal{E}=\{ E_i \}\), is a highly symmetric set of rank-1 effects that is minimally informationally complete, and satisfies the following two properties:  
\begin{equation}
\text{Tr}[E_i] = \frac{1}{d}, \quad \quad \text{Tr}[E_i E_j] = \frac{d \delta_{ij} + 1}{d^2 (d+1)}.
\end{equation}
They are typically optimal in various tasks involving measurement \cite{2014app484, 2008Dur338, 2006sco507, 2006kim024, 2005ren81, 2004reh321} and are of interest in their own right \cite{2017Fuc021}.

\begin{proposition}
  SIC-POVMs have maximal completeness stability.
\end{proposition}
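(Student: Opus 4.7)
The strategy is to show that the scaled frame operator of a SIC-POVM takes exactly the extremal form $F_s(\mathcal{E})=\tfrac{1}{d+1}(I_{d^2}+\ketbra{I}{I})$ identified in Eq.~\eqref{eq:tight-rank1-IC}, whose minimum eigenvalue $1/(d+1)$ is the largest value permitted by the analysis of Sec.~\ref{sec:extremes}. Rather than expanding $F_s(\mathcal{E})$ in an operator basis, I will route the argument through the Gramian, whose entries are fixed by the closed-form SIC inner-product data $\mathrm{Tr}[E_iE_j]$ and $\mathrm{Tr}[E_i]$.

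Concretely, I would first rescale the vectorised effects by setting $\ket{\tilde{E}_i}=\ket{E_i}/\sqrt{\braket{E_i}{I}}$, so that $F_s(\mathcal{E})=\sum_i \ketbra{\tilde{E}_i}{\tilde{E}_i}=VV^\dagger$ for the matrix $V$ whose columns are the $\ket{\tilde{E}_i}$. The associated scaled Gramian $\tilde{G}_{ij}=\braket{\tilde{E}_i}{\tilde{E}_j}=V^\dagger V$ shares the nonzero spectrum of $F_s(\mathcal{E})$, and since both act on $d^2$-dimensional spaces the two spectra must coincide entirely. Substituting the SIC identities collapses $\tilde{G}$ into the highly structured form $\alpha I+\beta J$, where $J$ is the $d^2\times d^2$ all-ones matrix. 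Reading off the eigenvalues of $J$ (namely $d^2$ once and $0$ with multiplicity $d^2-1$) immediately gives the spectrum of $F_s(\mathcal{E})$ as $\{1,\,1/(d+1),\ldots,1/(d+1)\}$, so $s(\mathcal{E})=1/(d+1)$, saturating the upper bound.

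I expect no serious obstacle beyond the mild one of justifying that $F_s(\mathcal{E})$ and $\tilde{G}$ share the full spectrum. This is just the standard $VV^\dagger$ versus $V^\dagger V$ identity; alternatively it follows from a scaled-frame version of the bipartite construction in Eq.~\eqref{eq:Mv}, where saturation of the trace bound $\mathrm{Tr}[F_s]=d$ (which holds because SIC effects are rank one) forces the underlying purification to be rank one and hence forces its two reduced states to have equal spectra. Everything else is direct substitution of the SIC inner products together with a one-line eigenvalue count on $\alpha I+\beta J$.
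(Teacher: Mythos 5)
Your proposal is correct and follows essentially the same route as the paper: both compute the $d^2\times d^2$ scaled Gramian from the SIC identities, decompose it as $\frac{1}{d+1}I_{d^2}+\frac{1}{d(d+1)}J_{d^2}$, and read off the spectrum $\{1,\,1/(d+1)\text{ with multiplicity }d^2-1\}$, transferring it to $F_s(\mathcal{E})$ via the shared spectrum of $VV^\dagger$ and $V^\dagger V$ (the paper phrases this via the bipartite purification of Sec.~\ref{sec:frame-theory}, which is the same fact).
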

\begin{proof}
For a SIC-POVM the scaled Gramian matrix will be \( d^2\times d^2 \) and give by
\begin{equation}
  \label{eq:Gsexplicit}
G_{s}(\mathcal{E}) =\sum_{j,k} \frac{\text{Tr}[E_j E_k]}{\sqrt{\text{Tr}[E_j] \text{Tr}[E_k]}}\ketbra{j}{k},
\end{equation}
it will have \(1/d\) on the diagonal and \(1/(d(d+1))\) elsewhere. \(G_{s}(\mathcal{E})\) will have the same set of non-zero eigenvalues as the scaled frame operator as shown in section~\ref{sec:frame-theory}.
We can decompose $G_{s}(\mathcal{E})$ as:
\begin{equation}\label{eq:Gsic}
    G_{s}(\mathcal{E}) = \frac{1}{d+1} I_{d^{2}} + \frac{1}{d(d+1)} J_{d^{2}},
\end{equation}
where $I_{d^{2}}$ is the \(d^{2}\)-dimensional identity matrix and
$J_{d^{2}}$ is the \(d^{2}\)-dimensional matrix with all entries equal to 1.
The matrix $J_{d^{2}}$ is a rank-1 matrix with known eigenvalues of  \(d^2\) and 0 with degeneracies of 1 and $d^2 - 1$ respectively.
Thus, the eigenvalues of $J_{d^{2}}/({d(d+1)})$ will be \( {d}/({d+1}) \) and $0$.
Since addition of the \(I_{d^{2}}\) term in \eqref{eq:Gsic} just shifts the eigenvalues by ${1}/({d+1})$,
the eigenvalues of \(G_{s}(\mathcal{E})\) are 1 (non-degenerate) and
\({1}/{(d+1)}\) (with a degeneracy of $d^2 - 1$). The frame operator
\(F_{s}(\mathcal{E})\) will have the same eigenvalues. Therefore SIC-POVMs have maximum completeness stability.
\end{proof}

Mutually unbiased bases (MUB) are sets of vectors that also have interesting symmetry properties and many applications, see for example \cite{2010dur535}. A MUB is a set of orthonormal bases such that for any pair of bases in the set, $\{ \ket{m_{1j}} \}$ and $\{ \ket{m_{2k}} \}$ say, then we have $ \braket{m_{1i}}{m_{1j}}=\braket{m_{2i}}{m_{2j}}= \delta_{i,j}$, and  $|\braket{m_{1i}}{m_{2j}}|^{2}={1}/{d}$. We can have up to \(d+1\) such bases in a set, and a set that has \(d+1\) bases is called \emph{maximal}. Maximal MUB are only known when the dimension is a power of a prime (i.e. \(d=p^{n}\) where \(p\) is prime).

We can construct a POVM out of a MUB, which we call a MUB-POVM, by
taking the projectors for all the vectors and scaling them
appropriately. A \emph{maximal} MUB-POVM will have the $(d+1)$ effects: 
\begin{equation}
  \mathcal{M}=\{\Pi_{11}, \dots, \Pi_{1d}, \dots, \Pi_{(d+1)1}, \dots ,\Pi_{(d+1)d}\}, 
\end{equation}
where \(\Pi_{jk}=\ketbra{m_{jk}}{m_{jk}}/(d+1)\).

\begin{proposition}
  Maximal MUB-POVMs have maximal completeness stability. Furthermore, a non-maximal
  MUB-POVM is informationally incomplete.
\end{proposition}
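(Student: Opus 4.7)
The plan is to establish the two claims separately via spectral analysis of the scaled Gramian $G_s(\mathcal{M})$, following the same strategy used above for SIC-POVMs.

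For the maximal completeness stability claim, I would first write down the entries of the $d(d+1)\times d(d+1)$ scaled Gramian using $\mathrm{Tr}[\Pi_{jk}]=1/(d+1)$ and the MUB inner-product relations: $1/(d+1)$ on the diagonal, $0$ between distinct effects within a single basis, and $1/(d(d+1))$ between effects drawn from different bases. This yields the compact decomposition
\begin{equation*}
  G_s(\mathcal{M}) = \tfrac{1}{d+1}\, I_{d(d+1)} - \tfrac{1}{d(d+1)}\, K + \tfrac{1}{d(d+1)}\, J,
\end{equation*}
where $K$ is the block-diagonal matrix consisting of $(d+1)$ all-ones blocks of size $d\times d$, and $J$ is the full $d(d+1)\times d(d+1)$ all-ones matrix. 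I would then diagonalise by splitting the index space into three simultaneously invariant subspaces: the line spanned by the all-ones vector, the $d$-dimensional space of block-constant vectors orthogonal to it, and the $(d+1)(d-1)=d^2-1$-dimensional space of vectors with zero sum in each block. Short computations give eigenvalues $1$, $0$, and $1/(d+1)$ respectively. Since $F_s$ and $G_s$ share their non-zero spectrum by the Schmidt-decomposition argument used earlier in the paper, $F_s(\mathcal{M})$ has eigenvalue $1$ (once) and $1/(d+1)$ with multiplicity $d^2-1$, matching the extremal form~\eqref{eq:tight-rank1-IC} and certifying maximal completeness stability.

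For the second claim, I would use Hilbert-Schmidt orthogonality of the traceless parts. A direct calculation using the MUB relation $|\braket{m_{jk}}{m_{lm}}|^2=1/d$ for $j\ne l$ gives
\begin{equation*}
  \mathrm{Tr}\!\left[\!\left(\Pi_{jk}-\tfrac{I}{d(d+1)}\right)\!\!\left(\Pi_{lm}-\tfrac{I}{d(d+1)}\right)\right]=0
\end{equation*}
whenever $j\ne l$. Thus the traceless parts of effects drawn from distinct bases span mutually Hilbert-Schmidt-orthogonal subspaces of $\mathcal{L}(\mathcal{H})$, each of dimension $d-1$ (the $d$ projectors within one basis are linearly independent and sum to $I/(d+1)$). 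The total span of all effects therefore has dimension $m(d-1)+1$, which is strictly less than $d^2$ for $m<d+1$; such a POVM fails to span $\mathcal{L}(\mathcal{H})$ and so cannot be informationally complete.

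The main obstacle, though not deep, is the bookkeeping of the three-subspace decomposition of $G_s(\mathcal{M})$: I must confirm each candidate subspace is simultaneously invariant under $I$, $K$, and $J$, that the dimensions sum to $d(d+1)$, and that the zero-eigenvalue subspace (of dimension $d$) correctly drops out when passing to the non-zero spectrum of $F_s$. Once this is in place, both claims follow immediately from the computations sketched above.
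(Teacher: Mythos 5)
Your proof is correct. For the first claim you arrive at exactly the paper's matrix: your $-K+J$ term is $(J_{d+1}-I_{d+1})\otimes J_d$, so the only difference is that you diagonalise by exhibiting the three invariant subspaces (all-ones, block-constant $\perp$ all-ones, blockwise zero-sum) rather than by multiplying the spectra of the tensor factors as the paper does; both give eigenvalues $1$, $0$ (multiplicity $d$), and $1/(d+1)$ (multiplicity $d^2-1$), and the transfer of the non-zero spectrum to $F_s$ via the Schmidt argument is the same. Your subspace bookkeeping checks out: the dimensions $1+d+(d^2-1)=d(d+1)$ sum correctly, and each subspace is indeed invariant under $I$, $K$, and $J$. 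For the second claim you take a genuinely different route. The paper simply reruns the Gramian spectral computation for $d$ bases and counts $d^2-d+1$ non-zero eigenvalues, so strictly speaking it only treats that one case explicitly and waves at the rest. Your argument -- Hilbert--Schmidt orthogonality of the traceless parts of effects from distinct bases, giving span dimension $m(d-1)+1<d^2$ for $m<d+1$ -- handles every non-maximal case uniformly and makes the geometric reason for incompleteness transparent (each extra basis can only contribute $d-1$ new traceless directions). The orthogonality identity itself is a one-line computation from $|\braket{m_{jk}}{m_{lm}}|^2=1/d$, and the span is insensitive to the overall $1/m$ versus $1/(d+1)$ normalisation of the effects, so that cosmetic discrepancy in your $\Pi_{jk}$ does not affect the conclusion. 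What your approach buys is generality and a structural explanation; what the paper's buys is that the same spectral machinery serves both halves of the proposition.
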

\begin{proof}
 For a maximal MUB-POVM, the Gramian matrix \eqref{eq:Gsexplicit} takes the form of a \(d(d+1)
 \times d(d+1)\) matrix 
\begin{align*}
   G_{s}(\mathcal{M}) &= \frac{1}{d+1}I_{d(d+1)} +\frac{1}{d(d+1)} (J_{d+1} - I_{d+1}) \otimes J_{d}.
\end{align*}

To determine the eigenvalues of \(  G_{s}(\mathcal{M}) \), we first analyze the spectral
properties of its components. The matrix \( J_{d+1} \) has eigenvalues
\( d+1 \) with multiplicity 1, and 0 with multiplicity \( d \). Consequently \( J_{d+1} - I_{d+1} \) has eigenvalues \( d \) with multiplicity 1 and \( -1 \) with multiplicity \( d \).  

The eigenvalues of a tensor product of matrices are given by the
products of their individual eigenvalues so \((J_{d+1} - I_{d+1}) \otimes J_{d} \)  
has eigenvalues given by \( d^2 \) with multiplicity 1, \( -d \) with
multiplicity \( d \), and 0 with multiplicity \( d^2 - 1 \).
Multiplying by the factor \( {1}/{(d(d+1))} \), and including the
shift from the term \( \frac{1}{d+1}I_{d (d+1)}\) results in
eigenvalues of the Gramian matrix \( G_{s}(\mathcal{M}) \) of \( 1 \)
with a multiplicity of 1, \( {1}/{(d+1)} \) with a multiplicity of
\( d^2 - 1 \), and eigenvalue \( 0 \) with a multiplicity of \( d \).
Since the corresponding frame operator \(F_{s}(\mathcal{M})\) has
\(d^{2}\) eigenvalues they are all non-zero and the minimum is
\({1}/{(d+1)}\).

Now consider a non-maximal MUB with say \(d\) sets of bases. Repeating
the calculation above we find there are only \(d^{2}-d+1\) non-zero
eigenvalues, consequently \(F_{s}(\mathcal{M})\) has \(d-1\) zero
eigenvalues and is not invertible --- i.e. the POVM is informationaly incomplete.
\end{proof}

\subsection{Product measurements}

Consider a bipartite system where one party implements the POVM \(\mathcal{A}\) and the other party implements POVM \(\mathcal{B}\). The scaled frame operator for the joint product-measurement becomes
\begin{align}
  F_{s}\left(\mathcal{A}\otimes \mathcal{B}\right) &= \sum_{j,k} \frac{\ketbra{A_{j}\otimes B_{k}}{A_{j}\otimes B_{k}}}{\braket{A_{j}\otimes B_{k}}{I}}\\
                                                   &= S_{2,3} \sum_{j,k} \frac{\ketbra{A_{j}}{A_{j}}\otimes\ketbra{B_{k}}{B_{k}}}{\braket{A_{j}}{I}\braket{B_{k}}{I}} S_{2,3}\\
  &= S_{2,3} (F_{s}(\mathcal{A})\otimes F_{s}(\mathcal{B})) S_{2,3},
\end{align}
where \(S_{2,3}\) is the swap operator between spaces 2 and 3. Since the swap does not change the eigenvalues,
\begin{equation}
  s(\mathcal{A}\otimes \mathcal{B}) = s(\mathcal{A})s(\mathcal{B}).
\end{equation}

The result generalises to \(n\)-partite product measurements where the completeness stability becomes the product of the individual completeness stabilities. This indicates that the best product measurements, in the sense of maximum completeness stability, will become exponentially worse with the number of parties \(n\) compared to the best joint measurement.

\section{Examples}
\label{sec:examples}

Consider a process that degrades a POVM \(\mathcal{A}=\left\{A_{1},
  \dots,A_{a}\right\}\) by adding a portion of the identity operator
to each effect to produce POVM \(\mathcal{B}(p)\) where
\begin{equation}
  B_{j}(p) = (1-p)A_{j}+\frac{p}{a}I.
\end{equation}

The completeness stability of \(\mathcal{B}(p)\) will be a monotonically
non-increasing function of \(p\). To see this, write the identity as
\(I=\sum_{j}A_{j}\) and now \(\vec{B}(p)=S(p)\vec{A}\) where \(S(p)\)
is a column stochastic matrix with \((1-p)+p/a\) on the diagonal and
\(p/a\) elsewhere. Furthermore this degradation composes as
\(S(p_{1})S(p_{2})=S(p_{3})\) with \(p_{2}=(p_{3}-p_{1})/(1-p_{1})\).
Since the completeness stability is non-increasing under stochastic
post-processing, the completeness stability of \(\mathcal{B}(p)\) will be a monotonically
non-increasing function of \(p\). The robustness of the measurement is
linearly decreasing for this degradation.
\begin{align}
  R(\mathcal{B}(p)) &= \sum_{j}\Lambda_{\max}(B_{j})-1 \nonumber\\
                    &= \sum_{j}\left[(1-p)\Lambda_{\max}(A_{j})+\frac{p}{a}\right]-1 \nonumber\\
                    &=(1-p) R(\mathcal{A})
\end{align}

For instance, in \( d=2 \), consider the following parameterized POVM
with six effects:  
\begin{align}\label{eq:povmEx1}
\left\{ \frac{1-p}{3}\ketbra{\pm x}{\pm x} + \frac{p}{6}I, \frac{1-p}{3}\ketbra{\pm y}{\pm y} + \frac{p}{6}I \right.\nonumber\\
\left.\frac{1-p}{3}\ketbra{\pm z}{\pm z} + \frac{p}{6}I  \right\},
\end{align}
where \(  \ket{\pm x} \), \(\ket{ \pm y} \), \(  \ket{\pm z} \) are the eigenstates of \( \sigma_x \), \( \sigma_y \), and \( \sigma_z \), respectively.

This set satisfies the POVM conditions, as the effects sum to the identity and remain positive semi-definite for \( 0 \leq p \leq 1 \). At \( p = 0 \) each effect is a rank-1 projector, whereas at \( p = 1 \) the POVM is completely uninformative. As expected, both the robustness and the completeness stability become worse as \( p \) increases, see Fig.~\ref{fig:ex1}. 

\begin{figure}[h]
  \centering
  \includegraphics[width=0.45\textwidth]{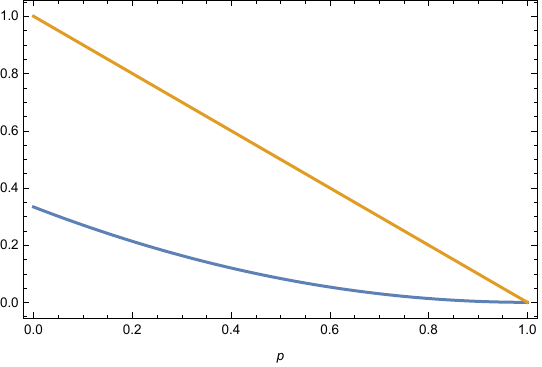}
  \caption{Example 1 with POVM \eqref{eq:povmEx1} showing the
    completeness stability (blue) and the robustness (orange).}
  \label{fig:ex1}
\end{figure}

A second example highlights a scenario where the completeness stability provides additional insights beyond robustness. Consider another parameterized POVM in \( d=2 \), given by  
\begin{equation}\label{eq:povmEx2}
\left\{\frac{1-a}{2}\ketbra{\pm x}{\pm x},\frac{1-a}{2}\ketbra{\pm y}{\pm y},a\ketbra{\pm z}{\pm z}\right\},
\end{equation}
where \( 0\leq a \leq 1 \), and \( \ket{\pm x} \), \( \ket{\pm y} \), \( \ket{\pm z} \) are the eigenstates of \( \sigma_x \), \( \sigma_y \), and \( \sigma_z \), respectively.  

Intuitively, at \( a = 0 \), measurements are only performed along the \( x \)- and \( y \)-directions in the Bloch sphere. As \( a \) increases, the weights of the \( x \)- and \( y \)-effects decrease, while the weight in the \( z \)-direction increases. At \( a = 1 \), the measurement consists solely of effects in the \( z \)-direction. Thus, the most balanced or symmetric measurement is expected to be optimal for reconstructing an arbitrary state.  

For this POVM, the robustness remains constant at 1 for all values of \( a \), making it an ineffective measure of quality.  However, the completeness stability is given by $\min(a,{(1-a)}/{2})$, which better captures the measurement’s ability to reconstruct a state.
\begin{figure}[h]\label{fig:ex2}
  \centering
  \includegraphics[width=0.45\textwidth]{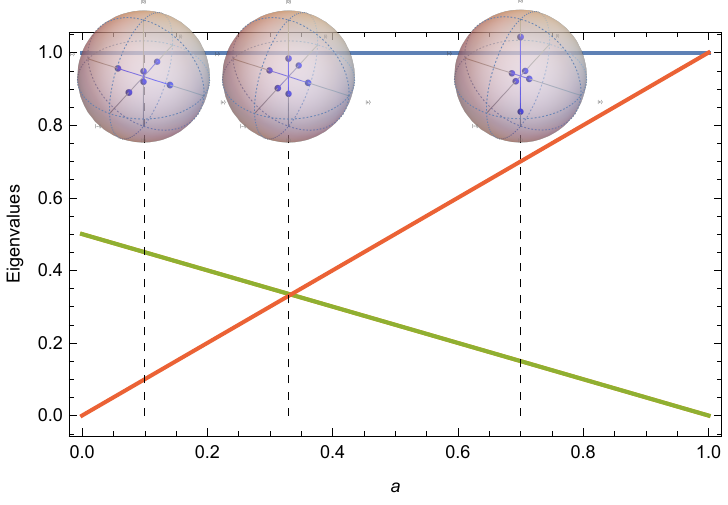}
  \caption{Example 2 with POVM \eqref{eq:povmEx2}, showing the
    eigenvalues of \(F_{s}\). Note that the maximum completeness stability occurs
    at \(a={1}/{3}\). Insets show the POVM effects plotted on the
    Bloch ball for \(a=\{0.1,0.333,0.7\}\). At \(a={1}/{3}\) the
    effects are most evenly distributed, and also achieving maximum completeness stability, and in the limits of \(a=0\)
    and \(a=1\) the POVM becomes informationally incomplete signalled
    by the presence of zero eigenvalues.}
\end{figure}

\section{Discussion and Conclusion}

In this work, we introduced an informative and operationally meaningful resource monotone for assessing the quality of a quantum measurement. Within this resource framework, quantum measurements are partially ordered under classical post-processing which is quantified by resource monotones. Ideally, one would like a \emph{complete} set of monotones \(\{\mu_j\}\) such that \(\mu_j(\mathcal{A}) \ge \mu_j(\mathcal{B})\) for all \(j\) implies \(\mathcal{A} \succeq \mathcal{B}\). While the set of all quantum state discrimination tasks forms such a complete family of monotones, it is vast and often impractical to work with. This highlights the need for compact, physically motivated monotones.

We focused on the regime of informationally complete POVMs, which are central to quantum information theory. We employed frame theory and introduced a new class of monotones derived from the \emph{scaled frame operator}. Among these, the minimum eigenvalue, which we referred to as the \emph{completeness stability}, emerged as a particularly significant quantity. It characterizes how close a POVM is to being informationally incomplete and also serves as an indicator of numerical stability. 

Beyond its role as a resource monotone, completeness stability also has direct operational significance in quantum information tasks. In particular, the state estimation error and variance of observable estimation in shadow tomography, averaged over all states of fixed purity, is tightly upper bounded by the inverse of completeness stability. Thus, POVMs with higher completeness stability yield more accurate and stable estimation procedures.

Interestingly, the highest possible value of this monotone is a signature of well-studied classes of measurements, corresponding to 2-designs in complex projective space. These are optimal for linear state tomography and also possess deep structural and symmetry properties. The completeness stability then offers potential avenues for further exploration such as approximately optimal POVMs, that have nearly maximal completeness stability.

Overall, our work highlights completeness stability as a meaningful and versatile monotone that connects resource-theoretic considerations with concrete operational advantages. It offers a principled way to compare and classify POVMs, and is a tool to improve the quality of a POVM.

A promising direction for future work is to identify new resource monotones that are both operationally meaningful and tailored to other specific quantum information tasks. Another compelling avenue is to extend the current framework beyond POVMs to encompass sets of quantum measurements, and to develop corresponding resource theories capable of capturing a wider range of operational scenarios.

\begin{acknowledgments}
  This research was funded in part by the Australian Research Council Centre of Excellence for Engineered Quantum Systems (Project number CE170100009). R.S. acknowledges funding from Sydney Quantum Academy.
\end{acknowledgments}

\bibliography{references}

\appendix

\section{Monotones based directly on frame operators}
\label{frame monotone}

Interestingly it seems much harder to find monotones based directly on
the frame operator itself~\eqref{eq:frame-operator}. We find that the
determinant of the frame operator is a monotone if the frame is
\emph{exact}, that is the POVM has \(d^{2}\) linearly independent effects.
\begin{proposition}
If the frame formed by the POVM effects of \(\mathcal{A}\) is exact, the determinant of the frame operator, $\det[F(\mathcal{A})]$, is a resource monotone.
\end{proposition}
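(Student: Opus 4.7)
The plan is to express the frame operator as $MM^{\dagger}$ for a single matrix $M$ built from the vectorized effects, exploit exactness to make $M$ square and invertible, and then use multiplicativity of the determinant to reduce the monotone inequality to a simple estimate on the post-processing stochastic matrix.

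First I would collect the vectorized effects of $\mathcal A$ into the $d^{2}\times d^{2}$ matrix $M$ whose $a$-th column is $\ket{A_{a}}$. By definition $F(\mathcal A)=MM^{\dagger}$. Exactness---any $\mathbb R$-basis of the Hermitian operators is also a $\mathbb C$-basis of $\mathcal L(\mathcal H)$---forces $M$ to be invertible, so $\det F(\mathcal A)=|\det M|^{2}>0$. Now, if $\mathcal A\succeq\mathcal B$ via $\ket{B_{b}}=\sum_{a}S_{ba}\ket{A_{a}}$ for a column-stochastic $S$ of shape $m\times d^{2}$, the analogous matrix for $\mathcal B$ is $M'=MS^{T}$, whence $F(\mathcal B)=MS^{T}SM^{\dagger}$. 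Because $M$ is square and invertible, the determinant factorises cleanly,
\begin{equation*}
\det F(\mathcal B) \;=\; |\det M|^{2}\det(S^{T}S) \;=\; \det F(\mathcal A)\cdot\det(S^{T}S).
\end{equation*}

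It then remains to show $\det(S^{T}S)\le 1$. The $d^{2}\times d^{2}$ matrix $S^{T}S$ is positive semidefinite, and its $a$-th diagonal entry is $\sum_{b}S_{ba}^{2}$. Column-stochasticity says $S_{ba}\in[0,1]$ and $\sum_{b}S_{ba}=1$, so $\sum_{b}S_{ba}^{2}\le\sum_{b}S_{ba}=1$. Hadamard's inequality for positive semidefinite matrices then gives $\det(S^{T}S)\le\prod_{a}(S^{T}S)_{aa}\le 1$, which closes the monotone inequality. The edge case where $\mathcal B$ is informationally incomplete is harmless: there $F(\mathcal B)$ is rank-deficient, $\det F(\mathcal B)=0$, and the inequality holds trivially.

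The main subtle point is the essential role of exactness: if $M$ were rectangular ($d^{2}\times n$ with $n>d^{2}$), the determinant would not factor multiplicatively, and one would be forced to work with pseudo-determinants or singular values, losing the clean argument. I would also remark briefly that one cannot strengthen the conclusion to a Löwner-order inequality $F(\mathcal A)\succeq F(\mathcal B)$: the "confusing outcomes" step $\{A_{a},A_{b}\}\mapsto\{A_{a}+A_{b}\}$ introduces indefinite cross terms $\ketbra{A_{a}}{A_{b}}+\ketbra{A_{b}}{A_{a}}$, so the determinant is precisely the scalar spectral invariant that survives general post-processing in the exact regime.
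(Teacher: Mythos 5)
Your proof is correct, and it takes a genuinely more direct route than the paper's. The paper reaches the same factorisation $\det F(\mathcal B)=\det(S^{\dagger}S)\det F(\mathcal A)$ by embedding everything in the rank-one bipartite state $M(\mathcal A)=\ketbra{K}{K}$ and shuttling the determinant between the frame operator and the Gramian via the Schmidt decomposition; you instead write $F(\mathcal A)=MM^{\dagger}$ with $M$ the square matrix of vectorized effects and let multiplicativity of the determinant do the work, which is shorter and makes the role of exactness (invertibility of $M$) transparent. The two arguments also diverge at the final estimate: the paper bounds $|\det S|\le 1$ by the spectral radius of a \emph{square} column-stochastic matrix, whereas your Hadamard-inequality bound $\det(S^{T}S)\le\prod_{a}\sum_{b}S_{ba}^{2}\le 1$ applies to rectangular $S$ as well, i.e.\ to post-processings that change the number of outcomes. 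This is a real gain: the paper's eigenvalue argument as written only covers the case where $\mathcal B$ has exactly $d^{2}$ effects, while yours covers the general case (and your observation that $\det F(\mathcal B)=0$ when $\mathcal B$ is informationally incomplete correctly disposes of the degenerate situation). Your closing remark that the L\"owner-order strengthening $F(\mathcal A)\succeq F(\mathcal B)$ fails is consistent with the paper's Appendix~\ref{frame monotone} discussion, though it is an aside rather than part of the proof.
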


\begin{proof}
  Consider a POVM \(\mathcal{A}\) with \(d^{2}\) linearly independent effects, in which case they form an exact frame. We can construct the rank-1 parent matrix \(M(\mathcal{A})\) \eqref{eq:Mv} from which we can determine both the frame operator \(F(\mathcal{A})\) and the Gramian \(G_{s}(\mathcal{A})\). Since the frame is exact, both of these matrices will have the same dimensions and eigenvalues.
      
Now for another POVM $\mathcal{B}$ with $\mathcal{A} \succeq  \mathcal{B}$, there exists a column-stochastic matrix $S$ with $\vec{B}=S\vec{A}$. If we expand out the symbolic representation in the vectors such that the first $d^{2}$ elements are the elements of $\ket{A_{1}}$, the second $d^{2}$ elements are from $\ket{A_{2}}$ and so on, we want the stochastic matrix to act in block-form, i.e. $S\otimes I$. That is,
\begin{equation}
    \sum_{j}\ket{j}\ket{B_{j}} = (S\otimes I) \sum_{k}\ket{k}\ket{A_{k}}.
\end{equation}
We now have
\begin{align}
    M(\mathcal{B})&= \frac{1}{N_{B}}\sum_{a,d}(\ketbra{a}{d}\otimes\ketbra{B_{a}}{B_{d}})\nonumber\\
                      &= \frac{1}{N_{B}}(S\otimes I)\sum_{a,d}(\ketbra{a}{d}\otimes\ketbra{A_{a}}{A_{d}})(S^{\dagger}\otimes I)\nonumber\\
    &= \frac{N_{A}}{N_{B}}(S\otimes I)M(\mathcal{A})(S^{\dagger}\otimes I)\label{eq:mvAB}
\end{align}

Since $M$ is a bi-partite pure state, the eigenvalues of the reduced density matrices will be equal so that
\begin{equation}
    \det[\tr_{1}M(\mathcal{B})]=\frac{1}{N_{B}^{d^{2}}}\det[F(\mathcal{B})]=\det[\tr_{2}M_{v}(\mathcal{B})].
\end{equation}
and using~\eqref{eq:mvAB}
\begin{align}
    \det[F(\mathcal{B})]&=N_{B}^{d^{2}}\det\left[\tr_{2}\left(\frac{N_{A}}{N_{B}}(S\otimes I)M(\mathcal{A})(S^{\dagger}\otimes I)\right)\right]\nonumber\\
                        &=N_{A}^{d^{2}}\det[{S}]\det[\tr_{2}M(\mathcal{A})]\det[S^{\dagger}]\nonumber\\
    &=N_{A}^{d^{4}}\det[S^{\dagger}S]\det[\tr_{1}M(\mathcal{A})]\nonumber\\
    &=\det[S^{\dagger}S]\det[F(\mathcal{A})]
\end{align}

Since \( S \) is a column-stochastic matrix, all of its eigenvalues lie within the unit circle in the complex plane; that is, \( |\lambda_s| \le 1 \) for all eigenvalues \( \lambda_s \) of \( S \). Consequently, \( \det[S^\dagger S] = |\det[S]|^2 \le 1 \). Hence, \( \det[F(\mathcal{B})] \le  \det[F(\mathcal{A})]\)

\end{proof}

\section{Counter example for complete set of monotones}
\label{sec:counter}

In proposition~\ref{monotones} we introduced an infinite set of resource monotones \(\mu_{X}(\mathcal{A})=\bra{X}F_{s}(\mathcal{A})\ket{X}\). It might be hoped that this set is a \emph{complete} set, but unfortunately this is not the case, and we show how to construct counter examples in this section.

A complete monotone set has the property
\begin{equation}
  \mu_{X}(\mathcal{A})\geq \mu_{X}(\mathcal{B})\;\;\forall X \implies \mathcal{A}\succeq \mathcal{B}.
\end{equation}
We have already established that
\begin{equation}
  \mathcal{A}\succeq \mathcal{B} \implies \bra{X}F_{s}(\mathcal{A})-F_{s}(\mathcal{B})\ket{X}\geq 0 \;\;\forall X,
\end{equation}
consequently \(\Delta F_{s}=F_{s}(\mathcal{A})-F_{s}(\mathcal{B})\) will be a positive semi-definite operator. N.B. it will always have a 0 eigenvalue since \(\ket{I}\) is always an eigenvector. If we find an example where \(\mathcal{A}\nsucceq \mathcal{B}\) but \(\Delta F_{s}\) is also positive semi-definite, then the set \(\mu_{X}(\mathcal{A})\) cannot be a complete set of monotones.

We start with an exact informationally complete POVM \(\mathcal{A}\) (one that has \(d^{2}\) elements which span the operator space). Since the operator frame is exact, it will form a basis. Now construct another POVM \(\mathcal{B}\):
\begin{equation}
  \vec{B}=R \vec{A}
\end{equation}
where \(R\) is a real matrix whose columns sum to 1 but include negative entries. Note that \(\vec{B}\) will not automatically form a POVM, that will depend on the values of \(\vec{A}\) and \(R\). It is always possible to permute the elements of a POVM but combining \(R\) with any permutation would still leave a matrix with negative entries. Now we can numerically search for \(R\) matrices that yield a valid POVM \(\mathcal{B}\) with distinct effects, and that has \(\Delta F_{s}\geq 0\). The initial POVM \(\mathcal{A}\) can also be generated randomly in the search for counter examples, following for example the procedures in~\cite{2020hei202}. 

Counter examples can be readily found, below we give an explicit example with exact values. Consider the following states for a qubit:
\begin{align}
  \ket{\psi_{1}} &= \ket{0}\\
  \ket{\psi_{2}} &= (\ket{0}+\sqrt{2}\ket{1})/\sqrt{3}\\
  \ket{\psi_{3}} &= (\ket{0}+\sqrt{2}e^{2\pi i /3}\ket{1})/\sqrt{3}\\
  \ket{\psi_{4}} &= (\ket{0}+\sqrt{2}e^{4\pi i /3}\ket{1})/\sqrt{3}.
\end{align}
Construct the effects of a SIC-POVM \(\mathcal{A}\) as \(A_{j} =\ketbra{\psi_{j} }{\psi_{j} }/2\). Now set,
\begin{equation}
  R = \frac{1}{9}
  \begin{pmatrix}
    10 & 3 & 0 & 0 \\
    -1 & 3 & 6 & 0 \\
    0 & 3 & 3 & 0 \\
    0 & 0 & 0 & 9
  \end{pmatrix}.
\end{equation}
This generates a valid informationally complete POVM \(\mathcal{B}\) and the eigenvalues of \(\Delta F_{s}\) are \(\{ (57+\sqrt{2053})/312 ,(57-\sqrt{2053})/312, 0, 0\}\) all of which are non-negative. 

%-----------------------------------------------------------------------------
\end{document}